\newtheorem{theorem}{Theorem}
\newtheorem{lemma}{Lemma}
\newtheorem{corollary}{Corollary}
\newtheorem{definition}{Definition}
\newtheorem{fact}{Fact}
\begin{document}

\title{A QPTAS for the Base of the Number 
of Triangulations of a Planar Point Set}

\author{
Marek Karpinski
\thanks{Research partially supported by DFG grants and the Hausdorff Center grant.}\\
Department of Computer Science, University of Bonn\\
\texttt{marek@cs.uni-bonn.de}
\and
Andrzej Lingas
\thanks{Research supported in part by VR grant 621-2011-6179.}\\
Department of Computer Science, Lund University\\
\texttt{andrzej.lingas@cs.lth.se}
\and
Dzmitry Sledneu\\
Centre for Mathematical Sciences, Lund University\\
\texttt{dzmitry@maths.lth.se}
}

\maketitle

\begin{abstract}
The number of triangulations of a planar $n$ point
set is known to be $c^n$, where the base $c$
lies between $2.43$ and $30$.
The fastest known algorithm for counting triangulations
of a planar $n$ point set runs in $O^*(2^n)$ time. 
The fastest known arbitrarily close approximation
algorithm for the base of the
number of triangulations of a planar $n$ point set
runs in time subexponential in $n$.
We present the first quasi-polynomial
approximation scheme for the base of the number
of triangulations of a planar point set.
\end{abstract}

\section{Introduction}

A \emph{triangulation}  $T$ of a set $S$ of $n$ points
in the Euclidean plane is a maximal set of
properly non-intersecting straight-line segments
with both endpoints in $S$. These straight-line segments
are called \emph{edges} of $T$.
Let $F(S)$ stand for the set of all triangulations
of $S$. 

The problem of computing the number of triangulations
of $S$, i.e., $|F(S)|$, is easy when $S$ is convex.
Simply, by a straightforward recurrence, 
$|F(S)|=C_{n-2}$, where $C_k$ is the $k$-th Catalan
number, in this special case. However, in the general
case, the problem of computing the number
of triangulations of $S$ is neither known to
be $\# P$-hard nor known to admit a polynomial-time counting algorithm.

It is known that $|F(S)|$ lies between $\Omega (2.43^n)$ \cite{SSW11}
and $O(30^n)$ \cite{SS11}. Since the so called flip graph
whose nodes are triangulations of
$S$ is connected~\cite{S78}, all triangulations of
$S$ can be listed in exponential time by a standard
traversal of this graph. 
Only recently, Alvarez and Seidel have presented an elegant
algorithm for the number of triangulations of
$S$ running in $O^*(2^n)$ time~\cite{AS13}
which is substantially below the aforementioned
lower bound on $|F(S)|$ (the $O^*$ notation suppresses polynomial
in $n$ factors).

Also recently, Alvarez, Bringmann, Ray, and Seidel~\cite{ABRS13}
have presented
an approximation algorithm for the number
of triangulations of $S$ based on a recursive
application of the planar simple cycle separator~\cite{M86}.
Their algorithm runs in subexponential $2^{O(\sqrt n\log n)}$ time
and over-counts the number of
triangulations by at most a subexponential 
$2^{O(n^{\frac 34}\sqrt{\log n})}$ factor.
It also yields a subexponential-time 
approximation scheme for the base
of the number of triangulations
of $S$, i.e., for $|F(S)|^{\frac 1n}$.
The authors of~\cite{ABRS13} observe also that
just the inequalities $\Omega (2.43^n)\le |F(S)|\le O(30^n)$   
yield the large exponential approximation 
factor $O(\sqrt {30/2.43}^n)$ 
for $|F(S)|$ trivially computable
in polynomial time.

\subsection{Our contribution}

We take a similar approximation approach to the
problem of counting triangulations of $S$
as Alvarez, Bringmann, Ray, and Seidel in~\cite{ABRS13}. However, importantly,
instead of using recursively the planar simple cycle separator~\cite{M86}, 
we apply recursively the so called
balanced $\alpha$-cheap $l$-cuts 
of maximum independent sets of triangles within
a dynamic programming framework developed
by Adamaszek and Wiese in~\cite{AW14,AW13}. 
By using the aforementioned techniques,
the authors of~\cite{AW14} designed 
the first quasi-polynomial time approximation
scheme (QPTAS) for the maximum weight independent set of
polygons belonging to the input set
of polygons with poly-logarithmically
many edges.

Observe that a triangulation of $S$ can
be viewed as a maximum independent set
of triangles drawn from the set of
all triangles with vertices in $S$ 
that are free from other points in $S$
(triangles, or in general polygons, are 
identified with their open interiors).
This simple observation enables us
to use the aforementioned
balanced $\alpha$-cheap $l$-cuts 
recursively in order to bound
an approximation factor of our approximation
algorithm. 
The parameter $\alpha$
specifies the maximum fraction of an
independent set of triangles that can be
destroyed by the $l$-cut, which is a polygon 
with at most $l$ vertices in a specially constructed
set of points of polynomial size.
%The other parameter $l$ specifies
%the length of the cut.

Similarly as the approximation algorithm
from~\cite{ABRS13}, our algorithm may over-count
the true number of triangulations
because the same triangulation can
be be partitioned recursively in many different
ways. In contrast with the approximation algorithm in~\cite{ABRS13},
our  algorithm may also under-count the
number of triangulations of $S$, since our
partitions generally destroy a fraction
of triangles in a complete triangulation of $S$.

Our approximation
algorithm for 
the number of triangulations of a set $S$
of $n$ points with integer coordinates in the plane
runs 
in $n^{(\log (n) /\epsilon)^{O(1)}}$ time.
For $\epsilon > 0$, it returns
a number at most $2^{\epsilon n}$ times
smaller and at most $2^{\epsilon n}$ times
larger than the number of triangulations
of $S$. Note that even for 
$\epsilon = (\log n )^{-O(1)}$,
the running time is still quasi-polynomial.

As a corollary, we obtain
 a quasi-polynomial approximation
scheme for the base of the number of triangulations
of $S$, i.e., for $|F(S)|^{\frac 1n}$.
This implies that the problem of
approximating $|F(S)|^{\frac 1n}$ cannot be
APX-hard (under standard complexity theoretical
assumptions).

\subsection{Organization of the paper}
In Preliminaries, we introduce 
basic concepts of the dynamic programming
framework from~\cite{AW13}. Section 3 presents
our approximation counting algorithm for the number
of triangulations of $S$. In Section 4,
the time complexity of the algorithm is
examined while in Sections 5, upper bounds
on the under-counting and the over-counting
of the algorithm are derived, respectively.
In Section 6, our main results are formulated.
We conclude with a short discussion on
possible extensions of our results in Section 7.

\section{Preliminaries}

The Maximum Weight Independent Set of Polygons Problem
(MWISP) is defined as follows~\cite{AW14}.
We are given a set ${Q}$  of $n$ polygons in the Euclidean plane.
Each polygon has at most $k$ vertices, each of the vertices
has integer coordinates. Next, each polygon $P$ in $Q$
is considered
as an open set, i.e., it is identified with the set
of points forming its interior. Also, each polygon 
$P\in Q$ has weight $w(P)>0$ associated with it. The task
is to find a maximum weight independent set of polygons
in $Q$, i.e., a maximum weight set $Q'\subseteq Q$ such that for all
pairs $P_i,\ P_j$ of polygons in $Q'$, if $P_i \neq P_j$
then it holds $P_i \cap P_j = \emptyset $.

The \emph{bounding box} of $Q$ is the smallest
axis aligned rectangle containing all polygons
in $Q$.

Note that in particular if 
${Q}$ consists of all triangles with vertices
in a finite planar point set $S$ such that
no other point in $S$ lies inside them or
on their perimeter, each having weight $1$,
then the set of all maximum independent
sets of polygons in ${Q}$ is just the set
of all triangulations of $S$. We shall denote
the latter set by $F(S)$.

Adamaszek and Wiese have shown that if $k=poly(\log n)$
then MWISP admits a QPTAS~\cite{AW14}.
\begin{fact}[\cite{AW14}] Let $k$ be a positive integer.
There exists a $(1+\epsilon)$-approximation
algorithm with a running time of $(nk)^{({\frac k{\epsilon}}\log
    n)^{O(1)}}$
for the Maximum Weight Independent Set of Polygons Problem provided
that each polygon has at most $k$ vertices.
\end{fact}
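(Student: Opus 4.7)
The plan is to prove Fact 1 by coupling a polynomial-size set of candidate cut-vertices with a dynamic program over ``polygonal subproblems,'' where the running time and approximation factor are controlled by the existence of balanced, cheap separating cuts with few vertices.

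First I would construct a polynomial-size point set $G$ inside the bounding box that contains, with high probability after a random shift, a near-optimal partitioning skeleton. A standard choice is a randomly shifted regular grid together with the vertices of all input polygons; the granularity of the grid is chosen as $\mathrm{poly}(n,k,1/\epsilon)$ so that any axis-parallel segment can be traced by $G$ with negligible loss. The dynamic program then operates on \emph{subproblems} indexed by simple polygons $R$ whose vertices lie in $G$ and which have at most $l=\mathrm{poly}(k,\log n,1/\epsilon)$ vertices; the value $\mathrm{DP}[R]$ is the maximum weight of an independent subset of $\{P\in Q:P\subseteq R\}$. To fill in $\mathrm{DP}[R]$ we enumerate all $l$-vertex polygonal cuts $C$ of $R$ into two sub-polygons $R_1,R_2$, discard every input polygon that crosses $C$, and set $\mathrm{DP}[R]=\max_C(\mathrm{DP}[R_1]+\mathrm{DP}[R_2])$. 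The total number of subproblem/cut pairs is $|G|^{O(l)}$, giving the claimed $(nk)^{(k\log n/\epsilon)^{O(1)}}$ running time.

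The correctness and the approximation factor hinge on the key combinatorial lemma: for any independent set $I$ inside a subproblem $R$, there exists a balanced $\alpha$-cheap $l$-cut, that is, a polygonal curve $C$ with at most $l$ vertices in $G$ that splits $R$ into two parts each containing at most (say) $\tfrac{2}{3}w(I)$ of the weight of $I$, while intersecting polygons of total weight at most $\alpha\,w(I)$. With $\alpha=\Theta(\epsilon/\log n)$, applying this lemma recursively along the (logarithmic-depth) recursion tree of the DP loses only a $(1-\epsilon)$ factor overall, so the optimum is approximated within $1+\epsilon$. Proving this lemma is the principal obstacle: one has to argue, via a random shifted hierarchical dissection and a careful charging argument, that short polygonal separators exist even when the input polygons have $k$ vertices, and that an $\alpha$ fraction suffices because each polygon is crossed by only $O(k/\alpha)$ possible cut levels in expectation.

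The remaining ingredients are routine. Derandomization of the shifted grid is done by enumerating the polynomially many shifts; the $(1+\epsilon)$-approximation guarantee follows by combining the loss-per-level bound with the logarithmic recursion depth enforced by balancedness; and integrality of coordinates is used only to bound $|G|$ polynomially. Putting these pieces together yields the stated QPTAS, with the combinatorial separator lemma carrying essentially all of the geometric content.
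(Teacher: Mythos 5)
The paper does not prove Fact~1 at all: it is imported verbatim from Adamaszek and Wiese \cite{AW14}, and only the statement (together with the cheap-cut tool, Fact~2) is used later. So your attempt has to be measured against the proof in \cite{AW14}, and there it has a genuine gap: you defer essentially all of the content to the ``key combinatorial lemma'' on balanced $\alpha$-cheap $l$-cuts and then only gesture at proving it via a randomly shifted grid/hierarchical dissection with an Arora-style charging argument. That route runs into exactly the difficulty \cite{AW14} had to overcome: already for triangles the objects can be long and skinny, so any grid line (shifted or not) can cross a constant fraction of the independent set, and no charging of the form ``each polygon is crossed at $O(k/\alpha)$ levels in expectation'' bounds the crossed weight by $\alpha\,w(I)$ with $\alpha=\Theta(\epsilon/\log n)$. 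In \cite{AW14} the candidate vertex set is not a shifted grid but a deterministic set of DP-points defined from the input geometry (intersections of vertical lines through triangle corners with triangle edges, plus intersections of segments spanned by such basic points --- the $O(n^4)$ set recalled in the Preliminaries of this paper), and proving that a balanced cheap cut with corners restricted to these points exists is the technical heart of their result. Your grid also has a quantitative problem: with integer coordinates the bounding box can have side length exponential in the input size, so a grid of granularity $\mathrm{poly}(n,k,1/\epsilon)$ cannot be ``traced with negligible loss'' --- all polygons may cluster inside a single grid cell --- which is precisely why the candidate points must be derived from the input rather than from a grid, and why integrality is not what bounds their number.

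Two smaller points. Cutting a cell by a closed polygon $\Gamma$ leaves an outer region with a hole, so your subproblems cannot all be simple polygons split into two simple pieces; one must either allow cells with holes (as this paper remarks in passing) or partition a cell into up to $k$ smaller cells as in \cite{AW14}. The DP bookkeeping you describe --- losing a $(1-\alpha)$ factor per level over the $O(\log n)$ levels forced by balancedness, and enumerating $|G|^{O(l)}$ cell/cut pairs --- does match the structure of \cite{AW14} and of the analysis reused in Section~5 of this paper, but it only becomes a proof once the cut lemma with corners in the polynomial candidate set is established, and that is the part your proposal does not supply.
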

Recently, Har-Peled generalized Fact 1 to include arbitrary
polygons~\cite{H14}.

We need the following tool from  \cite{AW14}.

\begin{definition}
Let $l\in N$ and $\alpha \in R$ where $0<\alpha<1$.
Let $T$ be a set of pairwise non-touching triangles.
A polygon $\Gamma$  is a balanced $\alpha$-cheap
$l$-cut of $T$ if
\begin{itemize}
\item $\Gamma$ has at most $l$ edges,
\item the total weight of all triangles
in $T$ that intersect $\Gamma$
does not exceed an $\alpha$ fraction
of the total weight of triangles in $T$,
\item 
the total weight of the triangles in $T$
contained in $\Gamma$ does not
exceed two thirds of the total weight
of triangles in $T$,
\item 
the total weight of the triangles in $T$
outside $\Gamma$ does not
exceed  two thirds of the total weight
of triangles in $T$.
\end{itemize}
\end{definition}

For a set of triangles $T$ in the plane,
the set of \emph{DP-points} consists
of \emph{basic DP-points} and \emph{additional DP-points}.
The set of basic DP-points contains
the four vertices of the bounding box
of $T$ and each intersection of
a vertical line passing through a corner
of a triangle in $T$ with any edge
of a triangle in $T$ or a horizontal
edge of the bounding box.
The set of additional DP-points 
consists of all intersections of
pairs of straight-line segments whose
endpoints are basic DP-points.
The authors of~\cite{AW14} observe that
the total number of DP-points is $O(n^4)$.

\begin{fact}[\cite{AW14}] Let $\delta >0$ 
and let $T$ be a set of pairwise non-touching triangles
in the plane such that the weight of no triangle in $T$
exceeds one third of the weight of $T$. Then there exists
a balanced $O(\delta)$-cheap $(\frac 1{\delta} )^{O(1)}$-cut 
with vertices at basic DP-points.
\end{fact}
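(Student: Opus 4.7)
The plan is to construct $\Gamma$ in two stages: first I would produce a balanced axis-aligned rectangle whose vertices are basic DP-points, and then I would refine its boundary by a bounded number of polygonal detours so as to drive the total weight of intersecting triangles below $\delta W$, where $W$ denotes the total weight of $T$.

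For the first stage, I would sweep a vertical line $x = t$ from the left side of the bounding box of $T$ rightward and let $L(t)$ be the total weight of triangles lying entirely in the halfplane $\{x < t\}$. Then $L$ is a non-decreasing step function that jumps only at $x$-coordinates of triangle corners, and each jump has magnitude at most $W/3$ by the hypothesis on the weights. Consequently there is a position $t^*$ with $L(t^*) \in [W/3, 2W/3]$, and $t^*$ can be taken to be the $x$-coordinate of some triangle corner. Letting $R$ be the intersection of the bounding box with the halfplane $\{x \le t^*\}$, the interior weight of $R$ equals $L(t^*) \in [W/3, 2W/3]$ and the exterior weight is at most $2W/3$. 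Three sides of $R$ lie on the bounding box and the fourth lies along the sweep line, so all four corners are basic DP-points.

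For the second stage, only the right side of $R$ can cross triangles. Call such a crossed triangle \emph{heavy} if its weight exceeds $\delta W$; at most $1/\delta$ heavy crossings can occur. For each heavy $\tau$ I would replace the portion of the right side that cuts $\tau$ by a polygonal detour of $O(1)$ edges that goes around $\tau$, with vertices at intersections of vertical lines through corners of $\tau$ with the current boundary or with edges of other triangles---all basic DP-points. The side of $\tau$ around which to detour is chosen to preserve the balance condition, which is feasible because each detour shifts at most $W/3$ of weight across the cut and the balance window has slack $W/3$. For the remaining \emph{light} crossings, a finer probabilistic argument is required: randomising $t^*$ over a family of basic-DP-point candidate positions in the balanced range and invoking linearity of expectation should yield some candidate whose total light-crossing weight is $O(\delta) W$.

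The main obstacle, as I anticipate it, is harmonising the two stages: each heavy-triangle detour relocates $\tau$ across the cut and can spoil the balance established in the first stage, while constraining detour directions to preserve balance limits how much heavy crossing weight can be removed. I expect to reconcile this by performing the averaging over $t^*$ before committing to the detour directions and by accounting for the cumulative weight shift of all detours against the $W/3$ slack in the balance window. Verifying that every introduced vertex coincides with a basic DP-point, while keeping the total edge count at $(1/\delta)^{O(1)}$ and the total crossing weight at $O(\delta) W$, is the crux of the technical verification.
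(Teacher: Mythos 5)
First, note that the paper does not prove this statement at all: it is imported verbatim as Fact~2 from \cite{AW14}, so the only meaningful comparison is between your sketch and the actual argument of Adamaszek and Wiese, which is substantially more involved than what you propose (this is already signalled by the $(1/\delta)^{O(1)}$ bound on the number of edges of the cut --- your construction would give only $O(1/\delta)$ edges, a red flag that the approach is too simple).

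The genuine gap is in your treatment of the \emph{light} triangles. Your cut is essentially a vertical segment at a balanced abscissa $t^*$, modified by $O(1)$-edge detours around the at most $1/\delta$ heavy triangles it meets; all remaining crossing weight is supposed to be controlled by averaging $t^*$ over candidate positions in the balanced range. That averaging claim is false in general: take $n$ pairwise disjoint, thin, nearly horizontal triangles of weight $W/n \le \delta W$ each, stacked vertically and spanning the entire width of the bounding box. Every vertical line in the balanced range crosses essentially all of them, so every candidate $t^*$ has light-crossing weight close to $W$, not $O(\delta)W$, and there are no heavy triangles to detour around. (A cheap balanced cut does exist in this example, but it is nowhere near vertical, which shows that the missing ingredient is a construction whose shape adapts to the input --- in \cite{AW14} the cut follows boundaries of suitably chosen triangles rather than a fixed sweep direction --- not a refinement of the averaging.) Two further points would also need repair even if the main gap were closed: with up to $1/\delta$ detours, each shifting up to $W/3$ of weight across the cut, the cumulative shift can far exceed the $W/3$ slack, so "choose the side of each detour to preserve balance" is not obviously feasible; and a detour around a heavy triangle can itself cross other triangles whose weight you never account for.
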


By a maximal triangulation of input points within a DP-cell,
we shall mean a partial
triangulations of these points
that cannot be extended by any edge within the cell.
Next, by a maximal fragment of a triangulation $R$ of input points, within a
DP cell, we shall mean the partial triangulation consisting of all
edges of $R$ that are contained  in the cell.

\section{Dynamic programming}

The QPTAS of Adamaszek and Wiese
for maximum weight independent set of polygons~\cite{AW14} is
based on dynamic programming. For each polygon
with at most $k$ vertices at the 
DP points induced by the input polygons, termed a DP cell, an approximate
maximum weight independent subset of the input polygons contained
in the DP cell is computed. The computation is done
by considering all possible partitions of the DP cell into
at most $k$ smaller DP cells. For each such partition,
the union of the approximate solutions for the component
DP cells is computed. Then, a maximum weight union
is picked as the approximate solution for the DP cell.

Our  dynamic programming algorithm, termed Algorithm 1
and depicted in Fig. 1.,
is in part similar to that of Adamaszek and Wiese~\cite{AW14}.
The set of input polygons consists
of all triangles with three vertices
in the input planar point set $S$
that do not contain any other
point in $S$. 
For each DP cell, an approximate
number of triangulations
within the cell is computed
instead of an approximate
maximum number of non-touching triangles
within the cell.
Further modification
of the dynamic programming
of Adamaszek and Wiese are as follows.
\begin{enumerate}
\item
Solely those partitions of a DP cell
into at most $k$ component DP cells
are considered where no component cell contains
more than two thirds of the
input points 
(i.e., the vertices
of the input triangles) 
in the partitioned cell.
(Alternatively, one could
generalize the concept of a DP cell
to a set of polygons with holes
and consider only partitions into two
DP cells obeying this restriction.)
\item
While a partition of a DP cell
into at most $k$ cells is 
processed, instead of the
union of the solutions
to the subproblems for
these cells, the product
of the numerical solutions
for the component DP cells
is computed. 
\item
Instead of taking the maximum
of the solutions induced
by the partition of a DP cell into at
most $k$ DP-cells, 
the sum of the numerical solutions 
induced by these partitions is computed.
\item
When the number of points
contained in a DP cell 
does not exceed the threshold number
$\Delta$ then 
the exact number of maximal triangulations
within the cell is computed.
\end{enumerate}

\begin{figure}
\begin{algorithmic}[1]
\REQUIRE A set $S$ of $n$ points with integer
coordinates in the Euclidean plane
and natural number parameters $k$ and $\Delta$. 
\ENSURE  An approximate number of triangulations of $S$. 

\STATE $T\leftarrow$
the set of all triangles
with vertices in $S$
that do not contain 
any other point in $S$;

\STATE $P\leftarrow $ a list 
of polygons (possibly with holes)
with at most $k$ vertices in total
at DP points induced by $T$,
topologically sorted 
with respect to geometric containment;
\FOR {each polygon $Q\in P$ 
containing at most $\Delta$ points in
$S$}

\STATE $tr(Q)\leftarrow$ exact number
of maximal triangulations
of points in $S\cap Q$ 
within $Q$;
\ENDFOR
\FOR {each polygon set $Q\in P$ 
containing more than $\Delta$ points in
$S$} 
\STATE $tr(Q)\leftarrow 0$;
\FOR {each partition of $Q$
into polygons $Q_1,\dots,Q_l \in P$,
where $l\le k$, no $Q_j$ contains
more than two thirds of points in $S\cap Q$, and 
$tr(Q_1)$ through $tr(Q_l)$ are defined}
%into two polygon sets $Q_1,\ Q_2 \in Q$
%by a balanced $\frac 23$-cheap cut
%with DP-vertices 
%(with respect to the triangles in $T$
%contained in $P$), where $tr(Q_1)$ and $tr(Q_2)$
%are defined} 
\STATE $tr(Q)\leftarrow tr(Q)+\prod_{j=1}^ltr(Q_j)$;
%\STATE $tr(Q)\leftarrow tr(Q)+tr(Q_1)tr(Q_2)$;
\ENDFOR
\ENDFOR
\STATE Output $tr(B)$, where $B$ is
the bounding box of $T$.
\end{algorithmic}
\caption{Algorithm 1 for approximately counting
triangulations of a finite planar point set.}
\label{fig: algo1}
\end{figure}

Algorithm 1 also in part resembles the approximation counting algorithm
for the number of triangulations of a planar point
set due to Alvarez, Bringmann, Ray, and Seidel~\cite{ABRS13}.
The main difference is in the used implicit recursive partition
tool. Algorithm 1 uses balanced $\alpha$-cheap $l$-cuts
within the dynamic programming framework from~\cite{AW13}
instead of the simple cycle planar
separator theorem~\cite{ABRS13,M86}.
Thus, Algorithm 1 recursively partitions 
a DP cell defining a subproblem into
at most $k$ smaller DP-cells while the
algorithm in~\cite{ABRS13} recursively
splits a subproblem by a simple cycle
that yields a balanced partition. 
The
new partition tool gives a better running time since the number of
possible partitions is much smaller so the dynamic
programming/recursion has lower complexity and the threshold for the
base case can be much lower.  Since the algorithm in \cite{ABRS13} in
particular lists all simple cycles on $O(\sqrt n)$ vertices, it runs
in at least $2^{O(\sqrt n \log n)}$ time independently of the
precision of the approximation.

\section{Time complexity}

The cardinality of $T$ does not exceed $n^3$.
Then, by the analogy with the dynamic
programming algorithm of Adamaszek and Wiese for
nearly maximum independent set
of triangles, the number
of DP cells is $(3n^3)^{O(k)}=n^{O(k)}$
(see Proposition 2.1 in~\cite{AW14}).
Consequently, the number of possible
partitions of a DP cell into at most
$k$ DP cells is $O({n^{O(k)} \choose k})$
which is  $n^{O(k^2)}$.

It follows that if we
neglect the cost of computing
the exact number of triangulations
contained within a DP cell including
at most $\Delta$ input points, then
Algorithm 1
runs in $n^{O(k^2)}$ time.
% as that of AW.

We can compute the exact
number of maximal triangulations contained
within a DP cell with at most
$\Delta$ input points   by
listing all complete triangulations
of these points and counting
only their maximal fragments
within the cell that are
maximal triangulations within the cell.
To list 
the complete triangulations
of the input points within a DP cell
we can apply any traversal
algorithm, like BFS or DFS, to
the so called {\em flip} graph
 of the triangulations which
is known to be connected \cite{S78}.
Hence, the listing takes
time proportional
to the number of
the triangulations, which
is at most $30^{\Delta}$ by \cite{SS11}.
It follows that
extracting the maximal fragments
and counting those
being maximal triangulations can be done
also in $2^{O(\Delta)}n^{O(1)}$ time.

We conclude with the following lemma.

\begin{lemma}\label{lem: time}
Algorithm 1 runs
in  $n^{O(k^2)}2^{O(\Delta)}$ time. 
\end{lemma}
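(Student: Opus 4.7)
The plan is essentially to combine two independent estimates: the cost of the outer dynamic programming (dominated by enumerating partitions) and the cost of the base case (dominated by flip-graph traversal of triangulations on at most $\Delta$ points). Both pieces are already sketched in the paragraphs preceding the lemma, so the proof amounts to assembling them cleanly and verifying that the stated product dominates the total work.

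First I would bound the number of DP cells. Since $|T|\le\binom{n}{3}\le n^3$, the set of DP points has size $O(n^4)$ by the observation attributed to \cite{AW14}. A DP cell is a polygon (possibly with holes) on at most $k$ such points, so the total number of DP cells is at most $(n^4)^{O(k)} = n^{O(k)}$, reproducing Proposition~2.1 of~\cite{AW14}. The number of partitions of a fixed DP cell into at most $k$ sub-cells is at most $\binom{n^{O(k)}}{k} = n^{O(k^2)}$. For each such partition the algorithm performs $O(k)$ arithmetic operations (one product of $\le k$ numbers, then one addition), which is negligible next to the partition count. Summed over DP cells, the total work spent outside the base case is therefore $n^{O(k)}\cdot n^{O(k^2)}\cdot \mathrm{poly}(n) = n^{O(k^2)}$.

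Next I would bound the base case. For a DP cell $Q$ containing at most $\Delta$ points of $S$, the algorithm must compute the exact number of maximal triangulations within $Q$ of $S\cap Q$. Using the connectivity of the flip graph~\cite{S78}, one enumerates every complete triangulation of $S\cap Q$ by BFS or DFS starting from an arbitrary triangulation, spending polynomial time per node. The upper bound $30^\Delta$ of~\cite{SS11} on the number of triangulations therefore gives enumeration time $30^\Delta\cdot n^{O(1)} = 2^{O(\Delta)}n^{O(1)}$. For each enumerated triangulation one extracts in polynomial time the edges lying in $Q$ and checks whether the resulting partial triangulation is maximal inside $Q$; the set of distinct maximal fragments obtained this way is exactly the set of maximal triangulations within $Q$, and duplicates can be discarded by standard hashing within the same time budget. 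Summed over the at most $n^{O(k)}$ DP cells that reach the base case, this contributes $n^{O(k)}\cdot 2^{O(\Delta)}n^{O(1)}$.

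Finally, adding the two contributions yields a total running time of $n^{O(k^2)} + n^{O(k)}\cdot 2^{O(\Delta)} = n^{O(k^2)}\cdot 2^{O(\Delta)}$, as claimed. The only mildly nontrivial ingredient is the base case, where one has to argue that flip-graph traversal indeed lists all triangulations (hence all maximal partial triangulations arise as restrictions) and that the $30^\Delta$ bound applies; both are immediate from the cited results. All other steps are routine counting.
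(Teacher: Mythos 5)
Your proposal is correct and follows essentially the same route as the paper: bound the number of DP cells by $n^{O(k)}$ and the partitions per cell by $\binom{n^{O(k)}}{k}=n^{O(k^2)}$, then handle the base case by flip-graph enumeration of the at most $30^{\Delta}$ triangulations and extraction of maximal fragments in $2^{O(\Delta)}n^{O(1)}$ time per cell. Your additional remarks on maximality checking and deduplication only make explicit what the paper leaves implicit.
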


\section{Approximation factor}

\subsection{Under-counting}

The potential under-counting
stems from the fact that when
a DP cell is partitioned
into at most $k$ smaller DP cells
%a balanced cut is applied
%to a triangulations within
%a DP cell 
then the possible
combinations of triangulation
edges crossing the partitioning edges
are not counted. Furthermore,
in the leaf DP cells, i.e.,
those including at most $\Delta $
points from $S$, we count only
maximal triangulations while
the restriction of a triangulation
of $S$ to a DP cell does not have
to be maximal.
See Fig.~\ref{fig. 2}.

\begin{figure}
\centering
\includegraphics[scale=0.4]{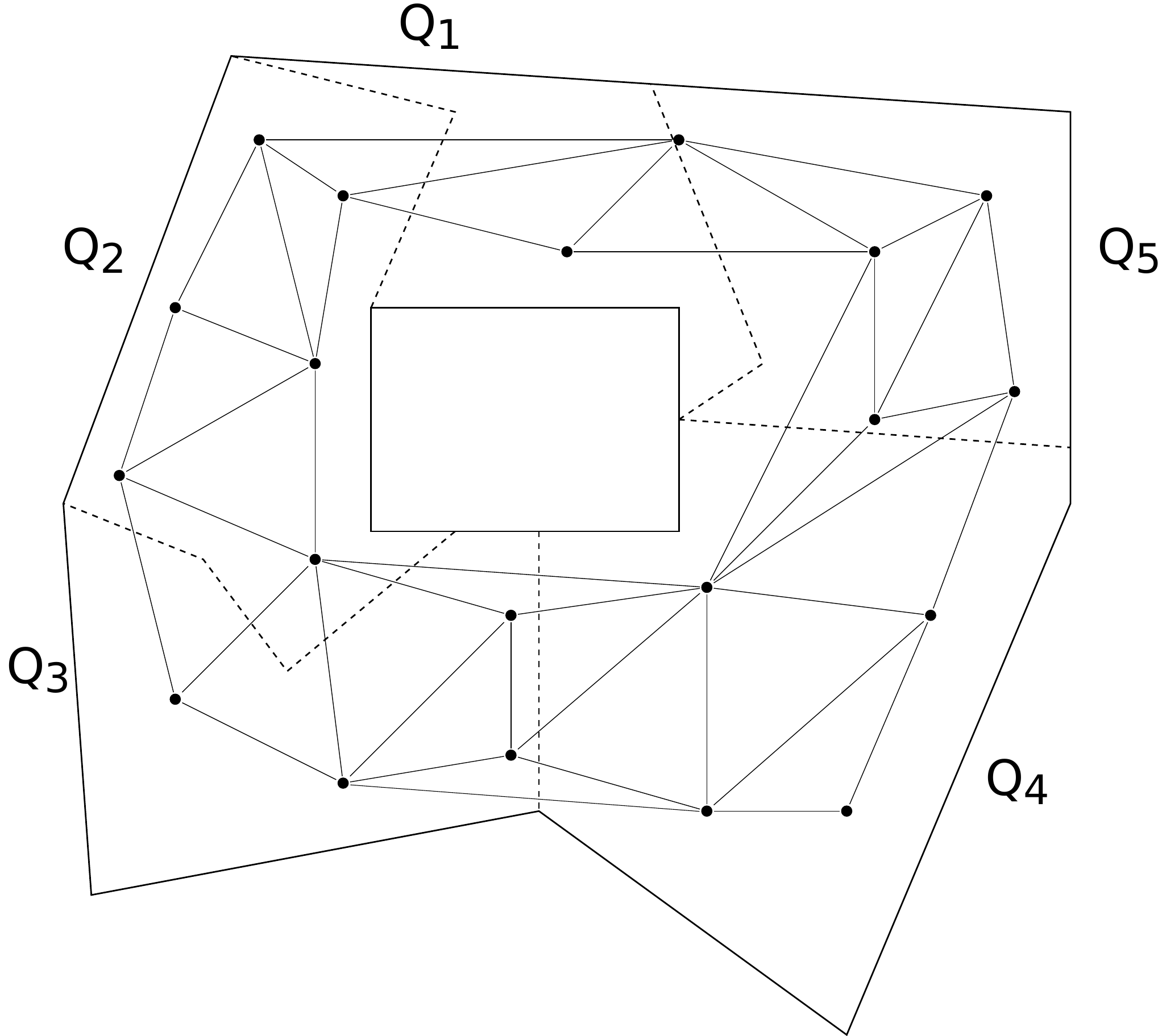}
\caption{An example of a maximal triangulation
within a DP cell and a partition of the DP cell into
smaller DP cells $Q_1$,\dots,$Q_5$ crossing some
triangles in the triangulation.}

\label{fig. 2}
\end{figure}

Intuitively, the general idea of the proof
of our upper bound on under-counting is
as follows. For each triangulation $W\in F(S)$,
there is triangulation counted by Algorithm 1
that can be obtained by removal $O(\epsilon n)$
edges and augmenting with $O(\epsilon n)$
other edges. Such a triangulation
is a union of maximal triangulations
contained in leaf DP cells.

\begin{lemma}\label{lem: under}
Let $S$ be a set of $n$ points in the plane
and let $\epsilon >0$.
For each $W\in F(S)$, there is a partial
triangulation $W^*\subseteq W$ of $S$
containing at least a $1-O(\epsilon)$ 
fraction of the triangular faces
of $W$ and a partial triangulation
$M(W^*)$ of $S$ which is an extension
of $W^*$ by $O(\epsilon n)$ edges
such that
the estimation returned by
Algorithm 1 with $k$ set to
$\log^{O(1)} (n)/\epsilon^{O(1)}$ is not less
than  $|\bigcup_{W\in F(S)} \{ M(W^*)\}|$.
\end{lemma}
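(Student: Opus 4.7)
\textbf{Proof plan for Lemma~\ref{lem: under}.}
The plan is to exhibit, for each $W\in F(S)$, a concrete recursive partition tree $\mathcal{T}(W)$ of DP cells whose leaves each contain at most $\Delta$ points of $S$, and which is among the partitions processed by Algorithm~1. The triangles of $W$ not crossed by any cut of $\mathcal{T}(W)$ will form $W^*$, and completing $W^*\cap L$ to a maximal triangulation of $S\cap L$ within each leaf~$L$ will produce $M(W^*)$. The counting lower bound then follows by unfolding the recursive definition of $tr(B)$.

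First I would build $\mathcal{T}(W)$ top-down, starting from the bounding box. At any cell $Q$ containing more than $\Delta$ points of $S$, I apply Fact~2 with $\delta=\Theta(\epsilon/\log n)$ to the collection of triangles of $W$ sitting in $Q$ (with a suitable weighting; the ``no triangle exceeds one third of the total weight'' assumption holds once $\Delta$ is a large enough constant). This yields a balanced $O(\delta)$-cheap $(1/\delta)^{O(1)}$-cut $\Gamma_Q$ with vertices at basic DP-points, splitting $Q$ into its interior and exterior (both polygons with holes at DP-points). By Euler's formula a triangulation of $m$ points has $\Theta(m)$ faces, so weighting each triangle by the number of its vertices inside $Q$ lets the triangle-weight balance of Fact~2 be converted into the point balance required by Algorithm~1 (``no sub-cell of $Q$ contains more than $2/3$ of the points of $S\cap Q$''). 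This reconciliation between triangle balance and point balance is the \emph{main technical obstacle}; it may require either a mild point-weighted variant of Fact~2, or a slight tightening of the $2/3$ constant, but is only bookkeeping.

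With each step shrinking the point count of a cell by a constant factor, the recursion depth is $O(\log n)$. At level $i$ the cuts destroy at most an $O(\delta)$ fraction of the triangles still alive, summed over the cells at that level; telescoping over the $O(\log n)$ levels, at most an $O(\delta\log n)=O(\epsilon)$ fraction of the triangles of $W$ are ever crossed by a cut. Let $W^*$ be the (surviving) remainder; it contains at least a $1-O(\epsilon)$ fraction of the triangles of $W$. At every leaf $L$, greedily extend $W^*\cap L$ to a maximal triangulation $M_L$ of $S\cap L$ within $L$, and put $M(W^*)=\bigcup_L M_L$. Each added edge borders a destroyed triangle, so the number of added edges is $O(\epsilon n)$. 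Each cell of $\mathcal{T}(W)$ is bounded by at most $O(\log n)$ cuts, each with $(1/\delta)^{O(1)}$ edges, so it has at most $(\log n/\epsilon)^{O(1)}$ DP-point vertices; choosing $k=(\log n/\epsilon)^{O(1)}$ therefore makes $\mathcal{T}(W)$ a legal partition tree for Algorithm~1.

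Finally, unrolling $tr(B)$ line~9 of Algorithm~1 expresses it as a sum, over all pairs $(\mathcal{T}, \sigma)$ consisting of a partition tree $\mathcal{T}$ considered by the algorithm and a tuple $\sigma$ assigning to each leaf cell $L$ of $\mathcal{T}$ a maximal triangulation of $S\cap L$ in $L$, of the quantity $1$. For each distinct $M(W^*)$, pick any $W\in F(S)$ realizing it and associate the pair $\bigl(\mathcal{T}(W),\,(M_L)_L\bigr)$; this pair is admissible since each $M_L$ is maximal in $L$ by construction. Different $M(W^*)$'s produce different pairs because $M(W^*)=\bigcup_L M_L$ is recoverable from $\sigma$. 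Therefore $tr(B)$ is at least the number of distinct $M(W^*)$, which is the claimed bound. Aside from the triangle-vs-point balance issue flagged above, the remaining work is routine verification of the depth, added-edge, and vertex-count estimates.
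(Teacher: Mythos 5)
Your proposal follows essentially the same route as the paper's own proof: recursively apply Fact~2 with $\delta=\Theta(\epsilon/\log n)$ to the triangles of $W$ to build an $O(\log n)$-depth tree of DP cells stopping at cells with at most $\Delta$ points, take $W^*$ to be the triangles never crossed by a cut, extend within each leaf to a maximal triangulation to get $M(W^*)$, bound $k$ by observing each cell is an intersection of $O(\log n)$ cuts with $(1/\delta)^{O(1)}$ edges each, and show $tr(B)$ counts every distinct $M(W^*)$ at least once (your unrolling/injection of $tr(B)$ over pairs of a partition tree and leaf triangulations is just a cleaner phrasing of the paper's induction on the tree height). The triangle-weight versus point-count balance mismatch that you flag as the main obstacle is a genuine issue, but the paper does not resolve it either --- its proof invokes only Fact~2's two-thirds triangle-weight balance, asserts the $\log_{3/2} n$ height bound, and tacitly assumes the resulting partitions satisfy Algorithm~1's two-thirds-of-points restriction --- so your treatment, which at least names the difficulty and sketches a vertex-weighted workaround, is no less complete than the published argument.
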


\begin{proof}
Let $W\in F(S)$. By adapting the idea
of the proof of the approximation ratio
of the QPTAS in~\cite{AW14}, consider the following
tree $U$ of DP cells obtained by recursive
applications of balanced $\alpha$-cheap $l$-cuts.

At the root of $U$, there is the bounding box.
By Fact 2, there is a balanced $\alpha$-cheap $l$-cut,
where $l=\alpha^{-O(1)}$,
that splits
the box into at most $k$ children DP cells such that only
$\alpha $ fraction of the triangular faces
of $W$ is crossed by the cut. The construction
of $U$ proceeds recursively in children DP cells
and stops in DP cells that contain at most
$\Delta$ points.

Note that the height of $U$
is not greater than $\log_{3/2}n $.

For a node $u$ of $U$, let $W_u$ 
be the partial triangulation
of the points in the DP cell $Q_u$ associated
with $u$ that is a restriction of $W$
%(i.e., its sets of vertices and edges are subsets of those of $W$) 
to (the vertices and edges of $W$ contained in) $Q_u$.
Next, let $W_u^*$ denote
the restriction of $W_u$ to
the union of $W_t$ over the 
the leaves $t$ of the subtree of $U$ rooted
at $u$.

By induction on the height $h(u)$
of $u$ in $U$, we obtain that
the partial triangulation $W^*_u\subseteq W_u$
contains $(1-\alpha )^{h(u)}$ fraction
of triangular faces of $W_u$.
Set $\alpha $ to $\frac {O(\epsilon ) }{\log (n/\epsilon)}$.
It follows in particular that
for the root $r$ of $U$,
$W_r^*\subseteq W$ contains
at least an $(1-\alpha )^{\log_{3/2}n/{\epsilon}}\ge 
1-O(\epsilon)$ fraction of triangular faces
in $W$. Set $W^*$ to $W_r^*$.
%Again by the induction on the height of a node
%$u$ of $U$, we easily obtain that 
%$W_u^*$ is a union of $W^*_t$ over
%the leaves of the subtree of $U$ rooted
%at $u$. 

For a leaf $t$ of $U$, let $M(W_t)$
be an extension of $W_t$ to a maximal
triangulation within the leaf cell $D_t$.
For a node $u$ of $U$, let $M(W^*_u)$
be a partial triangulation within $Q_u$
that is the union of $M(W_t)$
over the leaves $t$ of the subtree of $U$ rooted
at $u$. We have also $M(W^*)=M(W^*_r)$
by $W^*=W_r^*$. 

By the definition of $M(W^*)$,
$M(W^*)$ is an extension of $W^*$.
By the definition of $W^*$,
any edge of $M(W^*)$ that
is not an edge of $W^*$
has both endpoints at vertices
of triangles in $W$ that are
missing in $W^*$. It follows
that the number of edges
in $M(W^*)\setminus W^*$
is at most
$3\times O(\epsilon n)=O(\epsilon n)$.

We shall show by induction on $h(u)$
that Algorithm 1 while computing
an estimation for $Q_u$ 
%approximately 
at least counts 
the number of $M(W_u^*)$.

If $h(u)=0$, i.e., $u$ is a leaf
in $U$ then $W_u^*=W_u$ and consequently 
in particular $M(W_u^*)=M(W_u)$
is counted by  Algorithm 1.

Suppose in turn that $u$ is an internal
node in $U$ with $l$ children $u_1,\dots,u_l$.
When the estimation 
for $Q_u$ is computed by Algorithm 1, the sum
of products of estimations yielded by
different partitions of $Q_u$ into at most
$k$ DP cells is computed. In particular,
the partition into $Q_{u_1},\dots,Q_{u_l}$ is
considered. By the induction
hypothesis, the estimation for
$Q_{u_j}$ includes $M(W^*_{u_j})$ 
for $j=1,\dots,l$. Hence,
the product of these estimations
counts also $M(W_u^*)=\bigcup^l_{j=1}M(W^*_{u_j})$. 

By $M(W^*)=M(W_r^*)$, to obtain the lemma 
it remains to show that the bound
$\log^{O(1)} (n/\epsilon)/\epsilon^{O(1)}$
on $k$ is sufficiently large. 
Following the proof of Lemma 2.1 in~\cite{AW14},
observe that each DP cell $Q_u$ at each level
of $U$ is an intersection of at most
$O(\log (n/\epsilon ))$ polygons, each with
at most $l$ edges and vertices at basic DP points.
Hence, by $\alpha = \frac {O(\epsilon )}{\log (n/\epsilon)}$
and $l=\alpha^{-O(1)}$,
the resulting polygons have at
most $O(l^2\log^2 (n/\epsilon))$
$=\log^{O(1)} (n/\epsilon)/\epsilon^{O(1)}$
edges and vertices at basic and additional
DP points.
\end{proof}

\begin{theorem} \label{theo: under}
The under-counting factor of
%the algorithm for approximate
%triangulation counting 
Algorithm 1 with
$k$ set to\\
$\log^{O(1)} (n/\epsilon)/\epsilon^{O(1)}$ is
at most $2^{O(\epsilon n\log n)}$.
\end{theorem}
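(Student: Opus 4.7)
The plan is to deduce Theorem 1 from Lemma 2 by controlling how many distinct triangulations $W \in F(S)$ can share the same image $M(W^*)$. Lemma 2 already guarantees that the output of Algorithm 1 is at least $|\{M(W^*) : W \in F(S)\}|$, so the under-counting factor $|F(S)|/(\text{output})$ is bounded above by $\max_M |\{W \in F(S) : M(W^*)=M\}|$. The remaining task is therefore purely combinatorial: for a fixed partial triangulation $M$, count in how many ways a triangulation $W$ of $S$ can produce it.

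Given such an $M$, I would reconstruct $W$ in two stages. First, I would specify $W^*$ as $M$ with some set of edges deleted; by Lemma 2 the set $M \setminus W^*$ has size $O(\epsilon n)$, and $|M| \le 3n-6$, so the number of candidates is at most ${3n \choose O(\epsilon n)} = 2^{O(\epsilon n \log(1/\epsilon))}$. Second, since $W^*\subseteq W$ misses only an $O(\epsilon)$ fraction of the triangular faces of $W$, at most $O(\epsilon n)$ edges of $W$ lie in $W\setminus W^*$: every such edge is adjacent to at least one missing face of $W$, and each missing face contributes at most three such edges. Each of these edges is one of the ${n \choose 2}$ straight-line segments between points of $S$, so the number of candidates for $W\setminus W^*$ is at most ${n^2 \choose O(\epsilon n)} = 2^{O(\epsilon n \log n)}$. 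Multiplying the two counts, the preimage of any fixed $M$ has size at most $2^{O(\epsilon n \log n)}$.

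Putting the pieces together, $|F(S)| \le 2^{O(\epsilon n \log n)} \cdot |\{M(W^*) : W \in F(S)\}| \le 2^{O(\epsilon n \log n)}\cdot (\text{output of Algorithm 1})$, which is exactly the bound claimed in Theorem 1.

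The main point of care is the passage "missing faces $\Rightarrow$ missing edges'' that bounds $|W \setminus W^*|$ by $O(\epsilon n)$; this is not literally stated in Lemma 2 but follows from its proof, since the edges of $W$ absent from $W^*$ are precisely those crossed by the recursive balanced cuts, and these are already weight-charged against the same $O(\epsilon)$ fraction as the missing faces. Once that accounting is in hand the rest is a routine double-counting calculation, and no further structural ideas are needed.
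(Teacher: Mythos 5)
Your proposal is correct and follows essentially the same route as the paper's proof: you factor the map $W\mapsto M(W^*)$ through $W^*$, bound the number of candidate sets $W^*$ for a fixed $M$ by choosing the $O(\epsilon n)$ deleted edges of $M$, and then bound the number of ways to recover $W$ from $W^*$, exactly as the paper does. The only (harmless) difference is in the second factor, where you count arbitrary subsets of $O(\epsilon n)$ segments, giving $2^{O(\epsilon n\log n)}$, while the paper uses that the completion is itself a triangulation on $O(\epsilon n)$ vertices to get the sharper $2^{O(\epsilon n)}$; both yield the stated bound.
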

\begin{proof}
Consider any triangulation $W\in F(S)$.
By Lemma~\ref{lem: under}, the partial triangulation $W^*\subseteq W$
contains at least an $1-O(\epsilon)$ fraction
of the triangular faces of $W$. Hence, the edges
completing $W^*$ to any triangulation in $F(S)$
can be incident only to the vertices of the
triangular faces in $W$ that do not occur
in $W^*$. The number of the latter is at most
$3\times O(\epsilon n)=O(\epsilon n)$. It follows that
the number of ways of completing $W^*$ to a
full triangulation in $F(S)$ is not greater
than the number of full triangulations
on $O(\epsilon n)$ vertices which is
not greater than $30^{O(\epsilon n)}=2^{O(\epsilon n)}$.

By Lemma~\ref{lem: under}, 
the estimation returned by
Algorithm 1 with $k$ set to\\
$\log^{O(1)} (n/\epsilon)/\epsilon^{O(1)}$ is not less
than  $|\bigcup_{W\in F(S)} \{ M(W^*)\}|$.

Now it remains to show that the maximum number of
partial triangulations $(W')^*$, $W'\in F(S)$, for which
$M((W')^*)=M(W^*)$
is at most $2^{O(\epsilon n\log n)}$. To see this
observe that the edges extending $(W')^*$
to $M((W')^*)$ are incident to at most $O(\epsilon n)$
vertices of the $O(\epsilon n)$ triangular faces of $W'$
that are missing in $(W')^*$. Consequently,
the maximum number of such partial triangulations $(W')^*$
is upper bounded by the number of subsets
of at most $O(\epsilon n)$ edges of $M(W^*)$
(whose removal may form a partial triangulation
$(W')^*$ satisfying $M((W')^*)=M(W^*)$ ).
The latter number is $2^{O(\epsilon n \log n)}$.
%Now the theorem follows by Lemma~\ref{lem: under}.

We conclude that for $W\in F(S)$, 
the number of other triangulations
$W'\in F(S)$ for which $M((W')^*)=M(W^*)$
is at most $2^{O(\epsilon n)}2^{O(\epsilon n\log n)}=
2^{O(\epsilon n\log n)}$. Now, the theorem follows
from Lemma~\ref{lem: under}.
\end{proof}

\subsection{Over-counting}

The reason for over-counting in the estimation
returned by our algorithm is as follows.
The same
triangulation within a DP cell
may be cut in the number of
ways proportional to the number 
of considered partitions of the DP cell into at most $k$
smaller DP-cells. This reason is similar to
that for over-counting of the approximation
triangulation counting algorithm of  Alvarez, Bringmann, Ray, 
and Seidel~\cite{ABRS13}
based on the planar simple cycle separator theorem.
Therefore, 
our initial recurrences and calculations 
are similar to those derived
in the analysis of the over-counting from~\cite{ABRS13}.

\begin{lemma}\label{lem: over1}
Let $Q$ be an arbitrary DP cell 
processed by Algorithm 1 which
contains more than $\Delta$ input points.
Recall the calculation of the estimation for
$Q$ by summing the products of estimations
for smaller DP cells $Q_1,\dots,Q_l$ over $n^{O(k^2)}$ partitions
of $Q$ into $Q_1,\dots,Q_l$, $l\le k$.
Substitute the true value of
the number of maximal triangulations within 
each such smaller cell $Q_i$ for the estimated one
in the calculation.
Let $r$ be the resulting value. The
number of maximal triangulations within
$Q$ is at least $r/n^{O(k^2)}$.
\end{lemma}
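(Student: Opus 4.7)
The plan is to prove the stronger per-partition bound $\prod_{i=1}^{l} N(Q_i)\le N(Q)$, where $N(\cdot)$ denotes the true number of maximal triangulations within a DP cell, for every admissible partition $\pi$ of $Q$ into $Q_1,\dots,Q_l$. Summing this inequality over the $n^{O(k^2)}$ admissible partitions (the bound established in Section~4) then immediately yields $r=\sum_{\pi}\prod_i N(Q_i)\le n^{O(k^2)}\cdot N(Q)$, which is equivalent to the asserted $N(Q)\ge r/n^{O(k^2)}$.

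To establish the per-partition bound, I will exhibit, for each fixed $\pi$, an injective map $\phi_\pi$ from $l$-tuples $(m_1,\dots,m_l)$ of maximal triangulations within $Q_1,\dots,Q_l$ into the set of maximal triangulations within $Q$. Given such a tuple, set $U=\bigcup_{i=1}^{l} m_i$, a partial triangulation of $S\cap Q$ whose every edge lies inside a single piece of $\pi$. Define $\phi_\pi(m_1,\dots,m_l)$ as any deterministic maximal extension of $U$ within $Q$, for instance obtained by greedily inserting non-crossing edges in a fixed lexicographic order until nothing can be added. The output $W$ is by construction a maximal triangulation within $Q$ that contains $U$.

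The technical crux, which I expect to be the main obstacle, is to verify that no extension edge lies entirely inside a single piece $Q_i$. Suppose for contradiction that the greedy process inserts an edge $e\subseteq Q_i$. At the moment of insertion, $e$ is non-crossing with the current edge set, which contains $m_i$, so $e$ is in particular non-crossing with $m_i$ itself; since $e\subseteq Q_i$, it could then be added to $m_i$ within $Q_i$, contradicting the maximality of $m_i$. (This applicability of $e$ relative to $m_i$ is not affected by any later insertions, as adding edges only introduces further obstructions.) Consequently, every edge of $W$ lying entirely inside some $Q_i$ is already an edge of $m_i$, so restricting $W$ to each $Q_i$ recovers $m_i$ exactly. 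The assignment sending $W$ to the tuple of its restrictions is therefore a left inverse of $\phi_\pi$, proving injectivity and hence $\prod_i N(Q_i)\le N(Q)$. Summing over the $n^{O(k^2)}$ admissible partitions completes the proof.
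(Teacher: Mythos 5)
Your proof is correct and follows essentially the same route as the paper's: complete each tuple of maximal triangulations of the pieces to a maximal triangulation of $Q$, observe that for a fixed partition distinct tuples yield distinct completions (since restricting the completion to each $Q_i$ recovers $m_i$ by maximality), and charge each maximal triangulation of $Q$ at most once per partition, i.e.\ at most $n^{O(k^2)}$ times in total. Your explicit greedy completion and the verification that no extension edge lies inside a single piece merely spell out what the paper asserts in one sentence, so the two arguments coincide in substance.
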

\begin{proof}
Note that $r$ is the sum of the number
of different combinations  
of maximal triangulations within  
smaller DP cells $Q_1,\dots,Q_l$
over $n^{O(k^2)}$ partitions of
$Q$ into smaller cells $Q_1,\dots,Q_l$, 
$l\le k$. Importantly, each
such combination can be completed
to some maximal triangulation within $Q$
but no two different combinations coming
from the same partition $Q_1,\dots,Q_l$
can be extended to the same maximal triangulation
within $Q$.

Let $M$ be the set of maximal triangulations $W$
within $Q$ for which there is a partition
into smaller DP cells $Q_1,\dots,Q_l$, $l\le k$,
such that for $i=1,\dots,l$, $W$ constrained
to $Q_i$ is a maximal triangulation within $Q_i$.
Note that for each $W\in M$, the number of the combinations
that can be completed to $W$  cannot exceed that
of the considered partitions, i.e., $n^{O(k^2)}$,
as each of the combinations has to come
from a distinct partition $Q_1,\dots,Q_l$.

Thus, there is a binary relationship between
maximal triangulations within $Q$ that
belong to $M$ and the aforementioned
combinations. It is defined on all the
maximal triangulations in $M$ and on all the combinations,
and a maximal triangulation in $M$
is in relation with at most $n^{O(k^2)}$ combinations.
This yields the lemma.
\end{proof}

By Lemma~\ref{lem: over1}, we can express the over-counting
factor $L(Q,\Delta)$ of Algorithm 1 for 
a DP cell $Q$ 
%with $m>\Delta$ input points 
by the following recurrence:

$$L(Q,\Delta)=\sum_{(Q_1,\dots,Q_l)} \prod_{j=1}^lL(Q_j,\Delta)\le
n^{O(k^2)} \prod_{j=1}^{l^*}L(Q_j^*,\Delta)$$

where the summation is over all partitions of $Q$ into
DP cells $Q_1,\dots,Q_l$, where $l\le k$, 
and $Q^*_1,\dots,Q^*_{l^*}$ is a partition
that maximizes the term $\prod_{j=1}^lL(Q_j,\Delta)$.
When $Q$ contains at most $\Delta$ input points,
Algorithm 1 computes the exact number of maximal
triangulations of these points within $Q$. Thus, we have
$L(Q,\Delta)=1$  in this case.
%when $m\le \Delta $.

Following~\cite{ABRS13}, it will be more convenient
to transform our recurrence by taking
logarithm of both sides.
For any DP cell $P$, let
$L'(P,\Delta)=\log L(P,\Delta)$.
%and $L'(P,\Delta)=\log L(P,\Delta)$.
We obtain now:

$$L'(Q,\Delta)\le
O(k^2\log n ) +\sum_{j=1}^{l^*}L'(Q_j^*,\Delta)$$

\begin{lemma}\label{lem: over2}
Let $B$ be a bounding box for a set $S$ of
$n$ points in the plane.
% and $\Delta$
%is larger than sufficiently large constant. 
%Suppose that the function $k(\ )$ is not decreasing.
The following equality holds
$$L'(B,\Delta)=O(k^2n\log^2 n/\Delta )$$
\end{lemma}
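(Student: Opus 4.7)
The plan is to unfold the recurrence
$$L'(Q,\Delta)\le O(k^2\log n)+\sum_{j=1}^{l^*} L'(Q_j^*,\Delta)$$
into a single recursion tree $\mathcal{T}$ rooted at $B$ obtained by always expanding a cell $Q$ according to its worst-case partition $Q_1^*,\dots,Q_{l^*}^*$. Under this unfolding, each internal node of $\mathcal{T}$ contributes an additive $O(k^2\log n)$ and each leaf (a cell with at most $\Delta$ input points, for which $L'=0$) contributes nothing. Hence the lemma reduces to the combinatorial claim that $\mathcal{T}$ has at most $O(n\log n/\Delta)$ internal nodes.

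First I would bound the height of $\mathcal{T}$. Algorithm~1 only considers partitions in which no child contains more than two thirds of the input points of its parent, so a cell at depth $t$ of $\mathcal{T}$ contains at most $(2/3)^t n$ points of $S$. Once this quantity drops to at most $\Delta$ the cell is a leaf, which gives a height bound of $\log_{3/2}(n/\Delta)=O(\log n)$.

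Next I would bound the number of internal nodes at a single level. The children $Q_1,\dots,Q_l$ in any admissible partition form a geometric partition of $Q$, so the sets $S\cap Q_j$ partition $S\cap Q$ (modulo the harmless technicality of points lying on partition boundaries). By induction on the level $t$, the sum of $n(Q)$ over all cells $Q$ appearing at level $t$ of $\mathcal{T}$ is therefore at most $n$. Since every internal node satisfies $n(Q)>\Delta$, there are at most $n/\Delta$ internal nodes per level. Combining the two counts gives at most $O(n\log n/\Delta)$ internal nodes in total, and multiplying by the $O(k^2\log n)$ cost per node yields $L'(B,\Delta)=O(k^2 n\log^2 n/\Delta)$, as required.

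The step I expect to require the most care is the bookkeeping behind the point-partition identity $\sum_j n(Q_j)\le n(Q)$: because DP cells are treated as open polygons, points of $S$ that happen to lie on a partitioning edge could in principle be lost or double-counted, and one must confirm that the notion of ``input points in a cell'' used by Algorithm~1 is consistent across levels so that the per-level sum really telescopes to $n$. Once that finite technicality is handled (by, e.g., assigning each boundary point to a canonical child), the remainder of the argument is routine arithmetic on the depth and width of $\mathcal{T}$.
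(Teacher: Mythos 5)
Your proposal is correct and follows essentially the same route as the paper: unfold the recurrence over the recursion tree, note that every internal node's cell holds more than $\Delta$ points, that cells are disjoint (the paper counts parents of leaves, you count per level, which is if anything the cleaner bookkeeping), and that the two-thirds balance condition bounds the depth by $O(\log n)$, giving $O(n\log n/\Delta)$ internal nodes each contributing $O(k^2\log n)$. No substantive difference from the paper's argument.
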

\begin{proof}
Let $U$ be the recurrence tree and let $D$
be the set of direct ancestors of leaves
in $U$. For each node $d\in D$, the corresponding
$DP$ cell includes at least $\Delta +1$ 
points in $S$. It follows that $|D|\le n/\Delta $.
Also, any node in $D$ has depth $O(\log n)$  in $U$.
Consequently, the contribution of the subproblems
corresponding to  nodes in $D$
and their ancestors to the estimation for $L'(B,\Delta)$
can be upper bounded by $O(k^2\log n \times (n/\Delta)\log n)$.
Finally, recall that the subproblems corresponding
to leaves of $U$ do not contribute 
to the estimation.
\end{proof}

Lemma~\ref{lem: over1} 
immediately yields
the following corollary.

\begin{theorem}\label{theo: over}
Let $B$ be a bounding box for a set of
$n$ points in the plane.
Set the parameter $k$ in Algorithm 1
as in Theorem~\ref{theo: under}.
If for $\epsilon >0$
the parameter $\Delta$ in Algorithm 1
is set to $\frac c{\epsilon}k^2\log^2 n$
for sufficiently large constant $c$ then
the over-counting factor is at most
$2^{\epsilon n}$.
\end{theorem}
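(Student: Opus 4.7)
The plan is to obtain Theorem 3 as an essentially immediate consequence of Lemma 4, by choosing $\Delta$ large enough to make the bound $O(k^2 n\log^2 n/\Delta)$ on $L'(B,\Delta)$ smaller than $\epsilon n$. Since Lemma 4 already captures the entire recursive accounting of over-counting along the DP tree, what remains is just a parameter calibration.

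First, I would recall that by Lemma 4 the logarithm of the over-counting factor for the root bounding box $B$ satisfies
\[
L'(B,\Delta)\;=\;O\!\left(\frac{k^2\, n\,\log^2 n}{\Delta}\right).
\]
Next, I would substitute the prescribed $\Delta=\frac{c}{\epsilon}k^2\log^2 n$ into this expression; the $k^2\log^2 n$ factors cancel exactly, yielding $L'(B,\Delta)=O(\epsilon n/c)$. I would then choose the constant $c$ large enough to dominate the hidden constant in the $O(\cdot)$, so that $L'(B,\Delta)\le \epsilon n$, and therefore the over-counting factor itself is at most $2^{\epsilon n}$, as required. It is worth noting that the choice of $k=\log^{O(1)}(n/\epsilon)/\epsilon^{O(1)}$ from Theorem 1 is compatible with this calibration: it only affects the hidden polynomial in $\log n$ and $1/\epsilon$ inside $\Delta$, which is harmless here because Lemma 4's bound is \emph{linear} in $k^2\log^2 n$ and so its effect is annihilated by the cancelation.

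The main thing to double-check, rather than a real obstacle, is that plugging in these values of $k$ and $\Delta$ is consistent with the validity of Lemma 4 and of the preceding lemmas: the recurrence tree $U$ of depth $O(\log n)$ built implicitly by the DP must indeed have the property that each direct-parent-of-leaf cell contains more than $\Delta$ points, which was the only quantitative ingredient used. Since $\Delta$ is a polylogarithmic quantity in $n$ divided by a polynomial in $\epsilon$, the argument of Lemma 4 carries over verbatim. No additional combinatorial work is needed, so the proof reduces to a one-line substitution plus a remark about the choice of $c$.
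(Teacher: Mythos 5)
Your proposal is correct and matches the paper's intended argument: the paper derives Theorem~\ref{theo: over} directly from the bound $L'(B,\Delta)=O(k^2 n\log^2 n/\Delta)$ of Lemma~\ref{lem: over2} by the same substitution of $\Delta=\frac{c}{\epsilon}k^2\log^2 n$ and choice of a sufficiently large constant $c$ (the paper's text attributes the step to Lemma~\ref{lem: over1}, but that is only because Lemma~\ref{lem: over2} is itself built on it). Nothing further is needed.
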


\section{Main result}

By combining Lemma~\ref{lem: time}
with Theorems~\ref{theo: under},~\ref{theo: over} with $\epsilon$
set to $\epsilon /\log n$, we obtain
our main result.

\begin{theorem}\label{theo: main}
There exists an approximation
algorithm for 
the number of triangulations of a set $S$
of $n$ points with integer coordinates in the plane
with a running time of at most
$n^{(\log (n) /\epsilon)^{O(1)}}$  that returns
a number at most $2^{\epsilon n}$ times
smaller and at most $2^{\epsilon n}$ times
larger than the number of triangulations
of $S$.
\end{theorem}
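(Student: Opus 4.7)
The plan is simply to instantiate Algorithm 1 with parameters obtained by rescaling $\epsilon$ in Theorems~\ref{theo: under} and~\ref{theo: over}. Specifically, I would run the algorithm with $\epsilon':=\epsilon/\log n$, with $k$ chosen as in Theorem~\ref{theo: under} (that is, $k=\log^{O(1)}(n/\epsilon')/(\epsilon')^{O(1)}$), and with $\Delta$ chosen as in Theorem~\ref{theo: over} (that is, $\Delta=(c/\epsilon')k^2\log^2 n$ for the constant $c$ appearing there). Routine manipulation of the hidden $O(1)$ exponents, using $\log(n/\epsilon')=\log(n\log n/\epsilon)=O(\log(n/\epsilon))$, shows that both $k$ and $\Delta$ still fit the form $(\log n/\epsilon)^{O(1)}$.

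Next I would invoke Lemma~\ref{lem: time}, which gives a running time of $n^{O(k^2)}2^{O(\Delta)}$. Since $k^2=(\log n/\epsilon)^{O(1)}$ and $\Delta=(\log n/\epsilon)^{O(1)}$, the first factor is already $n^{(\log n/\epsilon)^{O(1)}}$; and the second factor, being of the form $2^{(\log n/\epsilon)^{O(1)}}=n^{(\log n/\epsilon)^{O(1)}/\log n}$, is absorbed into it. Thus the running time is $n^{(\log n/\epsilon)^{O(1)}}$, matching the claim.

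For the two-sided approximation guarantee, I would combine the two error bounds at the parameter $\epsilon'$. Theorem~\ref{theo: under} bounds the under-counting factor by $2^{O(\epsilon' n\log n)}=2^{O(\epsilon n)}$, while Theorem~\ref{theo: over} bounds the over-counting factor by $2^{\epsilon' n}=2^{\epsilon n/\log n}\le 2^{\epsilon n}$. Together these yield a two-sided $2^{O(\epsilon n)}$ approximation of $|F(S)|$; rescaling $\epsilon$ by the hidden constant then gives exactly the statement of the theorem.

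There is no serious obstacle in this final step: the substantial work has already been done in Lemmas~\ref{lem: time},~\ref{lem: under},~\ref{lem: over1},~\ref{lem: over2} and in Theorems~\ref{theo: under} and~\ref{theo: over}. The only care required is bookkeeping of the hidden $O(1)$ exponents to verify that after the substitution $\epsilon\mapsto\epsilon/\log n$ both $k$ and $\Delta$ remain polylogarithmic in $n/\epsilon$, so that the $2^{O(\Delta)}$ contribution to the running time is genuinely swallowed by $n^{O(k^2)}$ and the under-counting factor genuinely absorbs the extra $\log n$ produced by Theorem~\ref{theo: under}.
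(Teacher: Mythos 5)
Your proposal is correct and is essentially the paper's own argument: the paper likewise proves Theorem~\ref{theo: main} by combining Lemma~\ref{lem: time} with Theorems~\ref{theo: under} and~\ref{theo: over} after substituting $\epsilon/\log n$ for $\epsilon$, so that the under-counting factor $2^{O(\epsilon' n\log n)}$ becomes $2^{O(\epsilon n)}$ while $k$ and $\Delta$ stay $(\log n/\epsilon)^{O(1)}$ and the running time stays quasi-polynomial. Your write-up just makes the parameter bookkeeping (absorbing $2^{O(\Delta)}$ into $n^{(\log n/\epsilon)^{O(1)}}$ and rescaling the hidden constant) explicit, which the paper leaves implicit.
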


\begin{corollary}
There exists a $(1 +\epsilon)$-approximation
algorithm with a running time of at most
$n^{(\log (n) /\epsilon)^{O(1)}}$ for the base
of the number of triangulations of a set
of $n$ points with integer coordinates in the plane.
\end{corollary}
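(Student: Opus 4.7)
The plan is to derive the corollary directly from Theorem~\ref{theo: main} by a simple change of the precision parameter and extraction of the $n$-th root. Concretely, given a target accuracy $\epsilon > 0$ for the base $|F(S)|^{1/n}$, I would first set $\epsilon' := \log_2(1+\epsilon)$, which for small $\epsilon$ satisfies $\epsilon' = \Theta(\epsilon)$, so that $2^{\epsilon'} = 1+\epsilon$.

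Next, I would invoke Theorem~\ref{theo: main} with the internal precision parameter set to $\epsilon'$. That yields a number $N$ with
\[
|F(S)|\cdot 2^{-\epsilon' n} \ \le \ N \ \le \ |F(S)|\cdot 2^{\epsilon' n},
\]
computed in time $n^{(\log(n)/\epsilon')^{O(1)}} = n^{(\log(n)/\epsilon)^{O(1)}}$, using that $1/\epsilon' = \Theta(1/\epsilon)$ for $\epsilon$ bounded away from $1$ (and the case of large $\epsilon$ is trivial since one can always return any constant-factor approximation from the known bounds $2.43^n \le |F(S)| \le 30^n$).

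Then I would take the $n$-th root. Dividing the chain of inequalities by $n$ in the exponent gives
\[
|F(S)|^{1/n}\cdot 2^{-\epsilon'} \ \le \ N^{1/n} \ \le \ |F(S)|^{1/n}\cdot 2^{\epsilon'},
\]
and by the choice $2^{\epsilon'} = 1+\epsilon$, the output $N^{1/n}$ is a $(1+\epsilon)$-approximation of $|F(S)|^{1/n}$. The running time is unchanged up to the $\Theta(\cdot)$ relation between $\epsilon$ and $\epsilon'$, so it remains $n^{(\log(n)/\epsilon)^{O(1)}}$.

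There is no real obstacle here: the whole argument is the observation that an additive error of $\epsilon' n$ in the binary logarithm of $|F(S)|$ translates to a multiplicative error of $2^{\epsilon'}$ in $|F(S)|^{1/n}$, which is precisely the quantity the corollary wants to approximate. The only minor care needed is in calibrating $\epsilon'$ so that $2^{\epsilon'} \le 1+\epsilon$, which is a one-line estimate.
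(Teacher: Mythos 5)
Your proposal is correct and follows essentially the same route as the paper: apply Theorem~\ref{theo: main}, take the $n$-th root of the two-sided bound, and relate $2^{\epsilon}$ to $1+\epsilon$. The only cosmetic difference is that you recalibrate the internal parameter to $\epsilon' = \log_2(1+\epsilon)$ so that $2^{\epsilon'}=1+\epsilon$ exactly, whereas the paper runs the algorithm with the same $\epsilon$ and simply observes $2^{\epsilon} < 1+\epsilon$ for $\epsilon < \frac 12$.
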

\begin{proof}
Let $c^n$ be the number of triangulations
of the input $n$ point set, and let $\Lambda$
be the number returned by the algorithm
from Theorem~\ref{theo: main}.
We have $\max \{ \frac {c^n} {\Lambda}, \frac {\Lambda} {c^n} \} \le 2^{\epsilon n}$
by Theorem~\ref{theo: main}.
By taking the $n$-th root on both sides, we obtain
$\max \{ \frac {c} {\Lambda^{\frac 1n}}, \frac {\Lambda^{\frac 1n}} {c} \} \le
2^{\epsilon }$.
%$2^{-\epsilon }c\le \Lambda^{\frac 1n} \le 2^{\epsilon}c$.
Now it is sufficient to observe that
$2^{\epsilon} < 1+\epsilon $ for $\epsilon < \frac 12$.
\end{proof}

\section{Extensions}

Adamaszek and Wiese presented also an extension
of their theorem on $\alpha$-cheap cut of an independent
set of triangles (Fact 2) to include independent
polygons with at most $K$ edges (Lemma 3.1~\cite{AW14}). This makes possible
to generalize our QPTAS
to include the approximation of the number
of maximum weight partitions into $K$-gons.

\section*{Acknowledgments}

We thank Victor Alvarez, Artur Czumaj, Peter Floderus, Miroslaw
Kowaluk, Christos Levcopoulos and Mia Persson for preliminary
discussions on counting the number of triangulations of a planar point
set. We are also very grateful to unknown referees for pointing out an
imprecision in the under-counting analysis in a preliminary version of
our paper as well as and many other valuable comments.

{\small
}

\end{document}